\numberwithin{table}{section}
\theoremstyle{plain}
\newtheorem{theorem}{Theorem}[section]
\newtheorem{lemma}[theorem]{Lemma}
\newtheorem{proposition}[theorem]{Proposition}
\newtheorem{construction}[theorem]{Construction}
\theoremstyle{definition} 
\newtheorem{definition}[theorem]{Definition}
\newtheorem{example}[theorem]{Example}
\newtheorem{remark}[theorem]{Remark}
\renewcommand{\geq}{\geqslant}
\renewcommand{\leq}{\leqslant}
\renewcommand{\ge}{\geqslant}
\renewcommand{\le}{\leqslant}
\newcommand{\A}{\mathcal{A}}
\newcommand{\B}{\mathcal{B}}
\newcommand{\F}{\mathbb{F}}
\newcommand{\wt}{\mathrm{wt}}
\def\code{{\sf Code}}
\def\basis{{\sf Basis}}
\def\@adminfootnotes{%
  \let\@makefnmark\relax  \let\@thefnmark\relax
  \ifx\@empty\@date\else \@footnotetext{\@setdate}\fi
  \ifx\@empty\@subjclass\else \@footnotetext{\@setsubjclass}\fi
  \ifx\@empty\@keywords\else \@footnotetext{\@setkeywords}\fi
  \ifx\@empty\thankses\else \@footnotetext{%
    \def\par{\let\par\@par}\@setthanks}%
  \fi}\makeatother   
\begin{document}

\hyphenation{}


\thanks{Keywords: Linear code, parameters, Reed-Muller, rate.\newline
  2010 Math Subject Classification: 94B65, 94B05, 94B15.\hfill
 Date: \today.
}
\title[Sequences of linear Codes with rate times distance unbounded]{Sequences of linear codes where the\\ rate times distance grows rapidly}
\author[Faezeh Alizadeh, S. P. Glasby, Cheryl E. Praeger]{Faezeh Alizadeh, S.\,P. Glasby and Cheryl E. Praeger}

\address[Alizadeh]{
Shahid Rajaee Teacher Training University,
Tehran, Iran.\newline Email: {\tt Faezeh.Alizadeh2@gmail.com}
}
\address[Glasby, Praeger]{
Centre for the Mathematics of Symmetry and Computation,
University of Western Australia,
35 Stirling Highway,
Perth 6009, Australia.\newline
 Email: {\tt Stephen.Glasby@uwa.edu.au; Cheryl.Praeger@uwa.edu.au}
}
\begin{abstract}
For a linear code $C$ of length $n$ with dimension $k$ and minimum distance~$d$, it is desirable that the quantity $kd/n$ is large.
Given an arbitrary field $\F$, we introduce a novel, but elementary, construction that produces a recursively defined sequence of $\F$-linear codes $C_1,C_2, C_3, \dots$
with parameters $[n_i, k_i, d_i]$ such that $k_id_i/n_i$ grows quickly in the sense that $k_i d_i/n_i>\sqrt{k_i}-1>2i-1$.
Another example of quick growth comes from a certain subsequence of Reed-Muller codes. Here the field is $\F=\F_2$ and $k_i d_i/n_i$ is asymptotic to $3n_i^{c}/\sqrt{\pi\log_2(n_i)}$ where $c=\log_2(3/2)\approx 0.585$.
\end{abstract}
\maketitle
\section{Introduction}
Let $\F$ be an arbitrary field and let $\F^{n}$ denote the space of all vectors of length $n$ over~$\F$.
An  $\F$-linear code $C$ of length $n$ is  a subspace of the vector space $\F^n$, see~\cite{Huffman}*{Section 1.2}, and elements of $C$ are called \emph{codewords}.
Denote the dimension of $C$ by~$k$.
Given a codeword  $x \in C$, the \emph{(Hamming) weight} of $x$, denoted
by $\wt(x)$, is the number of non-zero coordinates in $x$.
The \emph{(minimum) distance} of $C$, denoted $d$ or $d(C)$, is $d(C) = \min \{d(x, y) \ | \ x, y\in C, x \ne y \}$ where $d(x,y)=\wt (x-y)$ is the number of coordinate entries where $x$ and $y$ differ.
An $\F$-linear code of length $n$, dimension $k$ and distance $d$ is abbreviated
an $[ n, k, d ]$-linear code.
The \emph{(information) rate} of an $[n, k, d]$-linear code $C$ is $R(C)= k/n$.
We henceforth consider only linear codes. 

Let $C_1,C_2,C_3,\dots$ be a sequence of $\F$-linear codes where $C_i$ has parameters $[n_i,k_i,d_i]$.
We are interested in the question: How quickly can $k_id_i/n_i$ approach infinity? It is easiest to understand the growth as a function
of one variable, such as $k_i$, $d_i$ or $n_i$. Since $k_id_i/n_i\le\min\{k_i,d_i\}\le n_i$ such functions will be at most linear.
Asymptotically good code sequences are ones which satisfy both $k_i/n_i\ge c$ and $d_i/n_i\ge c$ for some constant $c>0$, see~\cite{EKL}*{(1.1)}
and~\cite{ShiWuSole}, and hence  these code sequences achieve linear growth in $n_i$ since $k_id_i/n_i\geq c^2n_i$.
Such code sequences have been known
to exist since~\cite{CPW}, but they are not well understood, and the existence of asymptotically good \emph{cyclic} codes has been a long standing open problem~\cites{MW,ShiWuSole}
related to the uncertainty principle~\cite{EKL}. Other code sequences with reasonably good growth rates include the quadratic residue codes, see \cite{Huffman}*{Section 6.6}. These have parameters $[p,(p+1)/2,d]$ where $p$ is a prime, and $d\ge \sqrt{p}$ by~\cite{MacS}*{p.\,483, Theorem~1} so $kd/n$ is at least
$O(\sqrt{n})$. In Example~\ref{E:RM}, we describe a subsequence of Reed-Muller codes over $\F_2$ (\emph{c.f.}~\cite{L}) with faster growth
\emph{viz.} $kd/n=O(n^{\log_2(3/2)})$. On the other hand, we might ask about `non-examples', that is, code sequences for which $k_id_i/n_i$ does not grow. There are such examples even among well-known code families. For example, for any fixed prime-power $q$, there is an $\F_q$-linear Reed-Solomon code~\cite{QBC} with parameters $[n,k,n-k+1]$ provided $q\ge n$. The family of all such codes has $kd/n$ bounded as $kd/n\le\min\{k,d\}\le n\le q$.

We want $k$ to be large (relative to $n$) to transmit a lot of information, and $d$ to be large to correct many errors, see Section~\ref{S:FM} for more details.
For linear codes over a finite field, the Singleton bound ($d \le n-k+1$) exhibits the tension between $k$ and $d$, and
Maximum Distance Separable (MDS) codes have $d=n-k+1$, see \cite{San}. 
In Theorem~\ref{main} we exhibit an explicit construction for a sequence  $C_1, C_2,C_3, \dots$ of $\F$-linear codes, where $\F$ is a fixed but arbitrary field, and where $C_i$ has parameters $[n_i, k_i, d_i]$ and $k_id_i/n_i=O(\sqrt{k_i})$. 



\begin{theorem}\label{main}
For each field $\F$, there exists an infinite sequence $C_1,C_2,C_3,\dots$ of $\,\F$-linear codes such that $C_i$ has parameters $[n_i, k_i, d_i]$ that satisfy \[\displaystyle\frac{k_id_i}{n_i}=2i >\sqrt{k_i}-1>2i-1\qquad\textup{ for $i\ge1$}.\]
\end{theorem}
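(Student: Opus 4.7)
The plan is to exhibit, for each field $\F$, an explicit recursive construction of a sequence $C_1,C_2,\dots$ of $\F$-linear codes with parameters $[n_i,k_i,d_i]$ realising $k_id_i/n_i=2i$ together with the bounds $4i^2<k_i<(2i+1)^2$ (which are equivalent to $2i>\sqrt{k_i}-1>2i-1$). The latter force $k_i\in\{4i^2+1,\dots,4i^2+4i\}$, so the first step is to fix a convenient choice of the triple $(n_i,k_i,d_i)$ in this range. A clean choice is $k_i=2i(2i+1)=4i^2+2i$, $d_i=2i+1$, $n_i=(2i+1)^2$: one checks $k_id_i/n_i=2i$ exactly and $4i^2<4i^2+2i<(2i+1)^2$. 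An alternative triple that may be more convenient over small fields is $k_i=4i(i+1)$, $d_i=2(i+1)$, $n_i=4(i+1)^2$.

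Next I would specify $C_1$ explicitly as a small base code with the chosen parameters (for instance a $[9,6,3]$-code over $\F$ when $|\F|\ge 3$, or a $[16,8,4]$-code over $\F_2$) and give a recursive rule producing $C_{i+1}$ from $C_i$ by adjoining $n_{i+1}-n_i$ new coordinates and $k_{i+1}-k_i$ new generators, while boosting the minimum distance by $2$ at each step. A natural template is a Plotkin-type $(u\mid u+v)$ sum with an auxiliary code of suitable parameters, or an array-style construction in which the codewords are $(2i+1)\times(2i+1)$ arrays whose row and column sums satisfy prescribed parity relations.

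The main obstacle is to prove, by induction on $i$, that the minimum distance of $C_{i+1}$ equals $d_{i+1}=2i+3$. The lower bound $d_{i+1}\ge 2i+3$ should follow from analysing the support of a minimum-weight codeword of $C_{i+1}$: one projects it onto the $C_i$-block and case-splits on whether this projection vanishes; in each case the structural constraints of the recursion together with the induction hypothesis $d(C_i)\ge d_i$ force the weight to be at least $2i+3$. The matching upper bound is either immediate from the construction or obtainable by exhibiting a single codeword of that weight.

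Once $n_i,k_i,d_i$ are verified, the identity $k_id_i/n_i=2i$ is a direct computation, and the bounds $2i>\sqrt{k_i}-1>2i-1$ reduce to $4i^2<k_i<(2i+1)^2$, which for $k_i=4i^2+2i$ holds by elementary arithmetic.
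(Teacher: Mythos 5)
Your arithmetic is correct: once one has a code $C_i$ with parameters $[(2i+1)^2,\;2i(2i+1),\;2i+1]$ (or the alternative $[4(i+1)^2,\;4i(i+1),\;2(i+1)]$), the identity $k_id_i/n_i=2i$ is immediate, and the inequalities $4i^2<k_i<(2i+1)^2$ are exactly $2i>\sqrt{k_i}-1>2i-1$. That concluding reduction matches the paper, which likewise finishes with $k_i=4i(i+1)$, a one-line verification.

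However, what you have written is a plan, not a proof, and the part that is missing is precisely the substance of the theorem. You offer two unrelated candidate templates --- a Plotkin $(u\mid u+v)$ sum and an ``array-style'' construction --- without choosing one, and you yourself identify the induction on the minimum distance as ``the main obstacle'' and then assert it ``should follow'' from a case split on the support, without carrying out the analysis. No base code is actually exhibited, and over an \emph{arbitrary} field (infinite fields included) you would still need to verify that a code with \emph{exact} minimum distance $3$ (or $4$) and the stated $n,k$ exists. The paper devotes Sections~\ref{sec:bdd}--4 to exactly this gap: it defines a cyclic-shift construction $\code(B)$ on an ordered basis, isolates the notion of a $u$-bounded code, proves in Propositions~\ref{propadm} and~\ref{propadm2} precisely when the minimum distance multiplies as required under iteration, and then constructs explicit bases from the matrices $\A_i$ of Definition~\ref{deff} to seed the recursion. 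None of that work, or a substitute for it, appears in your sketch.

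One point worth flagging for your own interest: your target lengths $n_i=(2i+1)^2$ grow polynomially in $i$, whereas the paper's lengths $N_i$ are products of $\Theta(i^2)$ consecutive integers and so grow super-exponentially. If you could actually realise your parameters you would obtain a quantitatively stronger result (the same $kd/n$ at vastly smaller block length, so $kd/n$ on the order of $\sqrt{n}$ rather than roughly $\sqrt{k}$). But that is an aspiration, not a proof; the existence claim at the core of the theorem remains unestablished in your write-up.
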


 The codes $C_1,C_2,C_3,\dots$ in Theorem~\ref{main} are described in Definitions~\ref{defff} and~\ref{def:Fi}. They involve a novel construction described in Section~\ref{sec:bdd}, and their parameters $[n_i,k_i,d_i]$ are determined in Theorem~\ref{corf}. The codes are $\F$-linear, where $\F$ is an arbitrary field, and their limiting properties are similar to the well-known Reed-Muller~codes, see Example~\ref{E:RM}.

%

New codes can also be constructed from old codes via ``direct sums'' and ``repetition''. Both of these constructions fix the quantity $kd/n$, for
suppose that $C$ is an $\F$-linear code with parameters $[n,k,d]$, and $s$ is a positive integer.
The direct sum code $\bigoplus_{i=1}^s C\le\F^{ns}$ has parameters $[ns,ks,d]$, and the quantity $(ks)d/(ns)=kd/n$ is independent of $s$. Similarly, the repetition
code $\textup{diag}(\bigoplus_{i=1}^s C)=\{(u,\dots,u)\mid u\in C\}$ has parameters $[ns,k,ds]$, and again $k(ds)/(ns)=kd/n$ is independent of $s$.
By contrast, repeated application of the construction introduced to prove Theorem~\ref{corf} allows $kd/n$ to grow without bound.

\section{Further remarks}\label{S:FM}

We represent the vectors in $\F^n$ by $n$-dimensional column vectors over $\F$. Let $C$ be an $[n,k,d]$-linear code in $\F^n$ and let $B= (a_1, \dots , a_k)$ be an  ordered basis of $C$. Let $H=[a_1, \dots ,a_k]$ be the $n\times k$ matrix formed by this basis. Then the  code $C$ is equal to the set $\{ Hx \mid x\in \F^k\}$, and we regard an element $x\in \F^k$ as an information vector which is encoded as a codeword $Hx$. As $H$ has rank $k$, by permuting the coordinates of $\F^n$ if necessary, we will assume that the 
$k \times k$ sub-matrix formed by the first $k$ rows of $H$ is invertible. Thus the information vector $x$ is uniquely determined by the first $k$ entries of the associated codeword $Hx$ and the last $n-k$ entries of $Hx$ may be regarded as redundancy added to enable transmission errors to be corrected.      

The proportion $k/n$ is called the \emph{rate} $R(C)$ of $C$, and is a measure of the efficiency of $C$ in encoding information. The third parameter, the minimum distance $d$ of $C$ is directly related to the error correction capability of $C$: if strictly less than $d/2$ entries in a codeword are changed during transmission then the received $n$-tuple is closer (in the Hamming metric) to the original codeword transmitted than to any other codeword, and hence these `errors’ can be corrected and the correct codeword determined.  A high value of $d$ relative to $n$ denotes high reliability in transmitting information correctly.    

Both $k/n$ and the \emph{relative distance} $d/n$ are less than $1$: the larger the rate $k/n$ the more efficient is the code, while the larger the relative distance $d/n$ the more reliable is the code.  The question addressed in this paper is: How quickly can the quantity $k_id_i/n_i$ approach infinity? Our construction demonstrates that this quantity $kd/n$ can grow as fast as $\sqrt{k}$ for codes over any field $\F$.  

\begin{example}\label{E:RM}
A Reed-Muller code $\textup{RM}(m,r)$ is an $\F_2$-linear code, with parameters $[2^m,\sum_{j=0}^r\binom{m}{j},2^{m-r}]$ by~\cite{Huffman}*{Theorem 1.10.1}.
Thus $k_rd_r/n_r=2^{-r}\sum_{j=0}^r\binom{m}{j}$ for $\textup{RM}(m,r)$. The subsequence $(\textup{RM}(2r+1,r))_{r\ge1}$
has $\dim(\textup{RM}(2r+1,r))=k_r=\sum_{j=0}^r\binom{2r+1}{j}=2^{2r}$ and parameters $[2^{2r+1},2^{2r},2^{r+1}]$. Therefore $k_rd_r/n_r=\sqrt{k_r}=O(\sqrt{n_r})$.  It turns out that the Reed Muller codes $\textup{RM}(2r+1,r)$, and
the codes $C_r$ in the Theorem~\ref{corf} have $k_rd_r/n_r=O(\sqrt{k_r})$, and the quadratic residue codes have $k_rd_r/n_r$ at least $\sqrt{k_r}$.

A faster growing subsequence is $(\textup{RM}(m,\lfloor \frac{m}{3}\rfloor +1))_{m\ge1}$. It follows from~\cite{GP}*{Theorem~2} that the parameters
$[n_m,k_m,d_m]$ of $\textup{RM}(m,\lfloor \frac{m}{3}\rfloor +1)$ satisfy $k_m d_m/n_m$ is asymptotic to
\[
\frac{3}{\sqrt{\pi m}}\left(\frac{3}{2}\right)^m=O\left(\frac{n_m^{\log_2(3/2)}}{\sqrt{\log_2(n_m)}}\right)\qquad\textup{ where $n_m=2^m$ and $\log_2\left(\frac{3}{2}\right)\approx0.585$}.
\]
\end{example}

\section{Bounded linear codes}\label{sec:bdd}

In this section we present, and investigate, a construction which, given a linear code as input, produces a linear code with strictly larger length, dimension and minimum distance. We will be concerned with the extent to which the construction preserves the property of being bounded, which is defined as follows. 

\begin{definition}\label{defadm}{\rm
Let  $u$ be a positive integer, $C$ an $\F$-linear $[n,k,d]$-code, and $B=(a_1,\dots,a_k)$ an ordered basis for $C$. Then $C$ is said to be \emph{$u$-bounded relative to $B$} if the following three conditions hold:
\begin{enumerate}[{\rm (i)}]
\item
$\wt(a_j) = u$ for each $j$;
\item
$\wt(\sum_{j=1}^ka_j$)=$d$; and
\item
$u\ge d(1+k^{-1})$.
\end{enumerate}
We say that $C$ is \emph{bounded} if there exist $u$ and $B$ such that $C$ is $u$-bounded  relative to $B$.
}
\end{definition} 
\subsection{The construction}\label{sub:con}

Here we present the construction applied to an arbitrary linear code, and examine when the boundedness property is preserved.

\begin{construction}\label{conadm} 
We have the following input an output.\newline
{\sc Input:}\quad Let $C$ an $\F$-linear $[n,k,d]$-code in $\F^n$ with ordered basis $B=(a_1,\dots,a_k)$.
\newline
{\sc Output:}\quad The $\F$-linear code $\code(B)$ in $\F^{n(k+1)}$ which is the $\F$-linear  span of the sequence $\basis(B)=(a_1',\dots a_{k+1}')$, where 
\[
a_{1}' =
\begin{bmatrix}
0\\
a_{1} \\ 
a_{2} \\
\vdots \\
a_{k}
\end{bmatrix}, \ 
a_{2}' =
\begin{bmatrix}
a_{k} \\
0 \\
a_{1} \\
\vdots \\
a_{k-1}
\end{bmatrix},\ 
\dots, \ 
a_{k+1}' =
\begin{bmatrix}
a_{1} \\ 
a_{2} \\
\vdots \\
a_{k} \\
0 \\
\end{bmatrix},
\]
and $0$ denotes the zero vector in $\F^n$.       
\end{construction}

We show that the output $\basis(B)$  from Construction~\ref{conadm} is linearly independent, determine the parameters of the code $\code(B)$, and find conditions under which $\code(B)$ is bounded relative to $\basis(B)$. 

\begin{proposition}\label{propadm}
Let $\F, C, n,k,d,B$ be as in Construction~$\ref{conadm}$. 
\begin{enumerate}[{\rm (a)}]
    \item  Then $\basis(B)$ is an ordered basis for $\code(B)$, and $\code(B)$ is an $\F$-linear  $[n',k',d']$-code, where
\[
n'=(k+1)n,\quad k'=k+1  \quad \text{and $d'$ satisfies} \quad d'\geq kd.
\]
    \item Further, if $C$  is $u$-bounded relative to $B$, then $d'=(k+1)d$, and, setting $u'=ku$, $\code(B)$ is $u'$-bounded relative to $\basis(B)$ if and only if $u\ge d(1+2k^{-1})$.
\end{enumerate}
\end{proposition}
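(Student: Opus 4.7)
My plan is to exploit the cyclic block structure of the vectors $a_1',\dots,a_{k+1}'$ in Construction~\ref{conadm}. First I would record the key observation: reading $a_j'$ block-by-block, the $i$th block (for $i\in\{1,\dots,k+1\}$) equals $0$ when $i=j$ and equals $a_{\sigma_i(j)}$ when $i\ne j$, where $\sigma_i$ is a bijection from $\{1,\dots,k+1\}\setminus\{i\}$ onto $\{1,\dots,k\}$. Consequently every $\F$-linear combination $c'=\sum_{j=1}^{k+1}c_j a_j'$ has $i$th block equal to $\sum_{j\ne i}c_j a_{\sigma_i(j)}\in C$.

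For part (a), the length $n'=(k+1)n$ is immediate, and linear independence of $\basis(B)$ follows by setting $c'=0$ and reading block $1$: the relation $\sum_{j\ge 2}c_j a_{\sigma_1(j)}=0$ forces $c_2=\dots=c_{k+1}=0$ because $a_1,\dots,a_k$ are linearly independent, whence $c_1 a_1'=0$ gives $c_1=0$. Hence $k'=k+1$. For the distance bound, since $\sigma_i$ is a bijection and $a_1,\dots,a_k$ are linearly independent, the $i$th block of $c'$ vanishes if and only if $c_j=0$ for all $j\ne i$. If $c'$ is a nonzero codeword then at most one block can vanish, so at least $k$ blocks are nonzero, each lying in $C$ and hence of weight at least $d$, yielding $\wt(c')\ge kd$.

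For part (b), assume $C$ is $u$-bounded. Taking $c_1=\dots=c_{k+1}=1$, each block of the resulting codeword equals $\sum_{\ell=1}^{k}a_\ell$, of weight $d$ by condition (ii); this yields the upper bound $d'\le (k+1)d$. For the matching lower bound I would split on the number of nonzero $c_j$. If exactly one $c_j$ is nonzero, then $k$ blocks are nonzero, each of the form $c_j a_\ell$ and thus of weight $u$ by condition (i); condition (iii) gives $ku\ge k\cdot d(1+k^{-1})=(k+1)d$. If two or more $c_j$ are nonzero, then all $k+1$ blocks are nonzero with weight at least $d$, so the total weight is at least $(k+1)d$. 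Hence $d'=(k+1)d$.

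Finally, I would verify that $\code(B)$ is $u'$-bounded relative to $\basis(B)$ with $u'=ku$: each $a_j'$ has exactly one zero block and $k$ blocks equal to distinct $a_\ell$'s (by bijectivity of the $\sigma_i$), so $\wt(a_j')=ku=u'$, giving condition (i); condition (ii) is the weight computation already made; and condition (iii), $u'\ge d'(1+(k+1)^{-1})$, simplifies to $ku\ge (k+2)d$, that is $u\ge d(1+2k^{-1})$, so this inequality is both necessary and sufficient. The main (modest) obstacle is arranging the case split so that conditions (i)--(iii) of $u$-boundedness are invoked exactly where the bounds are tight; condition (iii) in particular supplies the slack of $d/k$ needed in the ``only one $c_j$ nonzero'' case.
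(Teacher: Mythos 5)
Your proof is correct and takes essentially the same approach as the paper's: you analyse the codeword block-by-block using the cyclic structure of the $a_j'$, observe that each block lies in $C$ and vanishes only if all but one coefficient vanishes, and then check the three boundedness conditions by direct weight computation. The only cosmetic differences are that you package the block indexing via bijections $\sigma_i$ rather than the paper's explicit formula $w_i=\sum_{j=1}^k c_{i-j}a_j$ (subscripts modulo $k+1$), and you phrase the case split in terms of the number of nonzero coefficients $c_j$ rather than the number of zero blocks $w_i$ (these are equivalent).
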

\begin{proof}
(a) By Construction~\ref{conadm}, $\code(B)$ has length $n'=(k+1)n$. Further, the fact that the $a_j$ are linearly independent implies that the $a_j'$ are linearly independent, so $\basis(B)$ is an ordered basis for $\code(B)$, and therefore $\code(B)$ has dimension $k'=k+1$.
It remains to bound the minimum distance $d'$.
%
An arbitrary non-zero codeword $w$ in $\code(B)$ has the form $w=\sum_{j=1}^{k+1} c_ja_j'$ for elements $c_j\in \F$, not all zero. 
We will prove that $\wt(w)\ge kd$, whence $d'\geq kd$. 
Write
\[
w =
\begin{bmatrix}
w_1 \\
w_{2} \\ 
\vdots \\
w_{k+1}
\end{bmatrix},
\] 
where each $w_i\in \F^{n}$. Then by the definition of the $a_j'$ in Construction~\ref{conadm}, 
\begin{equation}\label{eqwj}
w_i=\sum_{j=1}^k c_{i-j} a_j, \quad \mbox{for $i=1,\dots,k+1$, reading subscripts modulo $k+1$}.
\end{equation}
In particular each $w_i$ is a codeword of $C$. Thus if each of the $w_i$ is nonzero, then 
$\wt(w) = \sum_{i=1}^{k+1}\wt(w_i) \geq (k+1)d$. 
Suppose now that, for some $i$, $w_i=0$. Since the $a_j$ are linearly independent, it follows from \eqref{eqwj} that $c_{i-j}=0$ for $j=1,\dots,k$. Then since $w\ne 0$, we must have $c_{i}\ne 0$ so $w=c_ia_i'$, and hence
\begin{equation}\label{conadm2}
\wt(w) = \wt(a_{i}') =\sum_{j=1}^{k}\wt(a_j) \geq kd    
\end{equation}
since each $a_i$ is a codeword of $C$ and so has weight at least $d$. This proves part (a).

(b) Now assume that $C$ is $u$-bounded relative to $B$. Let $w=\sum_{j=1}^{k+1} c_ja_j'$ be an arbitrary non-zero codeword in $\code(B)$, and define the $w_i$ as above so that $\wt(w) =\sum_{i=1}^{k+1} \wt(w_i)$, and equation \eqref{eqwj} holds. If each of the $w_i$ is nonzero then  $\wt(w)\ge (k+1)d$. On the other hand if some $w_i=0$, then we showed above that equation \eqref{conadm2} holds, so $\wt(w) =\sum_{j=1}^{k} \wt(a_j)$. In this case, by Definition~\ref{defadm}(i), we have $\wt(a_j)=u$ for each $j$, and hence $\wt(w) = ku$. Further, by Definition~\ref{defadm}(iii), $u\ge d(1+k^{-1})$, and hence $\wt(w)=ku\geq (k+1)d$. Thus  the minimum distance $d'$ of $\code(B)$ satisfies $d'\geq(k+1)d$. To see that equality holds, consider the 
the codeword $w$ obtained by taking $c_1=\dots=c_{k+1}=1$. For each $i$, equation \eqref{eqwj} shows that $w_i=\sum_{j=1}^ka_j$, and hence $\wt(w_i)=d$ by Definition~\ref{defadm}(ii). Therefore, for this codeword $w$ we have $\wt(w)=\sum_{i=1}^{k+1} \wt(w_i) = (k+1)d$, and we conclude that $d'=(k+1)d$.   

Now set $u'=ku$. First, from the definition of the $a_j'$ we have $\wt(a_j') = \sum_{i=1}^k\wt(a_i)$, which is equal to $ku=u'$ by  Definition~\ref{defadm}(i). 
Second, we showed in the previous paragraph that $\wt(\sum_{i=1}^{k+1} a_i') =(k+1)d=d'$. Third, since $u'=ku$ and $d'(1+(k')^{-1}) = (k+1)d(1+(k+1)^{-1}) = d(k+2)$, it follows that $u'\geq d'(1+(k')^{-1})$ if and only if $u\geq d(k+2)/k=d(1+2k^{-1})$. Thus, by Definition~\ref{conadm},  $\code(B)$ is $u'$-bounded relative to $\basis(B)$ if and only if $u\ge d(1+2k^{-1})$. This completes the proof. 
\end{proof}
\subsection{Recursive applications of the construction}

By Proposition~\ref{propadm}(a),  the output code $\code(B)$ of Construction~\ref{conadm}  is always a linear code with ordered basis $\basis(B)$, and hence  Construction~\ref{conadm} may be applied repeatedly, producing larger and larger codes.  Moreover, Proposition~\ref{propadm}(b) implies that, provided the parameter $u$ is 
sufficiently large, $\code(B)$ is bounded relative to $\basis(B)$.  We investigate how many times we may apply Construction~\ref{conadm}  recursively and still obtain a bounded code.  To keep track of these repeated applications of Construction~\ref{conadm} we introduce some natural notation for the output codes and bases.
\begin{definition}\label{defadm2}{\rm
Let  $i$ be a positive integer, and let $C$ be an $\F$-linear code with an ordered basis $B$. In terms of the output of Construction~\ref{conadm} applied repeatedly to $C, B$, for $i=1$, let
\[
\code(B,1)=\code(B) \quad\text{and}\quad \basis(B,1)=\basis(B),
\]
and for $i\geq2$, define recursively,
\[
    \code(B,i) =\code(\basis(B,i-1))\quad\text{and}\quad \basis(B,i)=\basis(\basis(B,i-1)). 
\]
}
\end{definition}

We examine the parameters and boundedness of $\code(B,i)$ for various values of $i$.

\begin{proposition}\label{propadm2}
Let  $i$ be a positive integer, and let $C$ be an $\F$-linear $[n,k,d]$-code  with an ordered basis $B$. 
\begin{enumerate}[{\rm (a)}]
\item Then $\code(B,i)$ is an $\F$-linear $[n_i,k_i,d_i]$-code, where 
\[
n_i= n\prod_{j=1}^{i}(k+j), \quad k_i=k+i,\quad d_i\geq d\prod_{j=1}^{i}(k+j-1).
\]
\item Further, suppose that $C$ is $u$-bounded relative to $B$, and $u\ge d(1+ik^{-1})$, that is $i\le k(ud^{-1}-1)$. Then the minimum distance $d_i$ of $\code(B,i)$ is
\[
d_i=d\prod_{j=1}^{i}(k+j),\quad \mbox{and setting}\quad u_i=u\prod_{j=1}^{i}(k+j-1),
\]
$\code(B,i)$ is $u_i$-bounded relative to $\basis(B,i)$ if and only if $u\ge d(1+(i+1)k^{-1})$.
\end{enumerate}
\end{proposition}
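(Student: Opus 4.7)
My plan is to prove both parts simultaneously by induction on $i$, using Proposition~\ref{propadm} both as the base case ($i=1$) and as the engine of the inductive step. The recursive definition $\code(B,i)=\code(\basis(B,i-1))$ lets me apply Proposition~\ref{propadm} at each step to the code $\code(B,i-1)$ equipped with its ordered basis $\basis(B,i-1)$.

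For part (a) the induction is essentially bookkeeping. Assuming inductively that $\code(B,i-1)$ has length $n_{i-1}=n\prod_{j=1}^{i-1}(k+j)$, dimension $k_{i-1}=k+i-1$, and distance at least $d\prod_{j=1}^{i-1}(k+j-1)$, Proposition~\ref{propadm}(a) yields that $\code(B,i)$ has length $(k_{i-1}+1)n_{i-1}=(k+i)n_{i-1}$, dimension $k_{i-1}+1=k+i$, and distance at least $k_{i-1}\cdot d\prod_{j=1}^{i-1}(k+j-1)$, which is exactly the claimed product formula after reindexing.

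For part (b) I observe that the hypothesis $u\ge d(1+ik^{-1})$ is precisely the boundedness condition supplied by the inductive statement at stage $i-1$; hence $\code(B,i-1)$ is $u_{i-1}$-bounded relative to $\basis(B,i-1)$ with $u_{i-1}=u\prod_{j=1}^{i-1}(k+j-1)$ and $d_{i-1}=d\prod_{j=1}^{i-1}(k+j)$. This unlocks Proposition~\ref{propadm}(b), giving the exact distance $d_i=(k_{i-1}+1)d_{i-1}=d\prod_{j=1}^{i}(k+j)$ and displaying $\code(B,i)$ as a $u_i$-bounded code with $u_i=k_{i-1}u_{i-1}=u\prod_{j=1}^{i}(k+j-1)$, in agreement with the asserted formulas.

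The main obstacle — and indeed the only non-mechanical piece of the argument — is translating the boundedness criterion that Proposition~\ref{propadm}(b) delivers for $\code(B,i)$, namely $u_{i-1}\ge d_{i-1}(1+2k_{i-1}^{-1})$, into a condition on the original parameters $u, d, k$. The plan is to substitute the product expressions for $u_{i-1}$ and $d_{i-1}$, cancel the common factor $\prod_{j=1}^{i-1}(k+j-1)$, and exploit the telescoping identity
\[
\prod_{j=1}^{i-1}\frac{k+j}{k+j-1}=\frac{k+i-1}{k},
\]
which should collapse the inequality to
\[
\frac{u}{d}\ge \frac{k+i-1}{k}\cdot\frac{k+i+1}{k+i-1}=\frac{k+i+1}{k},
\]
that is, $u\ge d(1+(i+1)k^{-1})$, exactly as required. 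Once this simplification is verified, the inductive step closes and both parts of the proposition follow.
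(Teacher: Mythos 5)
Your proposal is correct and follows essentially the same route as the paper: induction on $i$ with Proposition~\ref{propadm} supplying both the base case and the inductive step, the observation that the hypothesis $u\ge d(1+ik^{-1})$ activates the inductive boundedness of $\code(B,i-1)$, and the same telescoping simplification of $u_{i-1}\ge d_{i-1}(1+2k_{i-1}^{-1})$ to $u\ge d(1+(i+1)k^{-1})$. No gaps.
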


\begin{proof}
(a) Our proof is by induction on $i$. It follows from Proposition~\ref{propadm}(a) that $\code(B,1)$ is an $\F$-linear $[n_1,k_1,d_1]$-code
where $n_1=n(k+1), k_1=k+1$ and $d_1\geq dk$. Suppose that $i\geq2$ and assume that the assertions of part~(a) hold for $\code(B,i-1)$.
Proposition~\ref{propadm}(a) implies that $n_i=n_{i-1}(k_{i-1}+1), k_i=k_{i-1}+1$ and $d_i\geq d_{i-1}k_{i-1}$. Hence the assertions of part~(a) hold for also for $\code(B,i)$.

(b) Now suppose that $C$ is $u$-bounded relative to the ordered basis $B$. Then $u\geq d(1+k^{-1})$ by Definition~\ref{defadm}(iii), and by Proposition~\ref{propadm}(b), the minimum distance $d_1$ of $\code(B,1)$ is $d_1=d(k+1)$; also  $\code(B,1)$ is $u_1$-bounded, where $u_1=ku$, if and only if $u\geq d(1+2k^{-1})$. Thus part (b) holds for $i=1$.  Assume now that $i\geq2$ and $i\leq k(ud^{-1}-1)$, and also that part (b) holds for $\code(B,i-1)$, that is, 
\[
 d_{i-1}=d\prod_{j=1}^{i-1}(k+j),\quad \mbox{and setting}\quad u_{i-1}=u\prod_{j=1}^{i-1}(k+j-1),
\]
$\code(B,i-1)$ is $u_{i-1}$-bounded relative to $\basis(B,i-1)$ if and only if $u\ge d(1+ik^{-1})$. Since we are assuming that the inequality 
$u\ge d(1+ik^{-1})$ holds, we conclude that $\code(B,i-1)$ is $u_{i-1}$-bounded relative to $\basis(B,i-1)$. Thus applying Proposition~\ref{propadm}(b) to  $\code(B,i-1)$ with  $\basis(B,i-1)$, shows that $d_i=d_{i-1}(k_{i-1}+1)$ and that $\code(B,i)$ is $(u_{i-1}k_{i-1})$-bounded relative to $\basis(B,i)$ if and only if $u_{i-1}\geq d_{i-1}(1+2k_{i-1}^{-1})$.
Since $k_{i-1}+1=k+i$ by part (a), we conclude that $d_i$ is as in part (b).  Also, since $u_{i-1}k_{i-1}=u\prod_{j=1}^{i}(k+j-1)$ is equal to $u_i$, $\code(B,i)$ is $u_i$-bounded relative to $\basis(B,i)$ if and only if the inequality $u_{i-1}\geq d_{i-1}(1+2k_{i-1}^{-1})$ holds. Substituting the values above for $u_{i-1}, d_{i-1}, k_{i-1}$, this inequality is 
\[
u\prod_{j=1}^{i-1}(k+j-1) \geq \left(d\prod_{j=1}^{i-1}(k+j) \right) \frac{k+i+1}{k+i-1},
\]
that is, $u\geq dk^{-1}(k+i+1)=d(1+(i+1)k^{-1})$. Thus part (b) is proved by induction. 
\end{proof}

\begin{remark}
We note that, the quantity mentioned in the introduction, namely the rate times the minimum distance, grows slowly with the number $i$ of applications of Construction~\ref{conadm}: for $1\le i \le k(ud^{-1}-1)$ we have, by Proposition~\ref{propadm2}, that 
\[
\frac{k_id_i}{n_i}=\frac{(k+i)d}{n} = \frac{kd}{n} + \frac{id}{n}.
\]
\end{remark}

\section{Explicit instances of the construction}
In this section we construct a family of linear codes which are defined as column spaces of a certain family of square matrices. First we introduce the matrices.

\subsection{A family of matrices}\label{sub:mx}
As always $\F$ denotes an arbitrary field. We introduce an infinite family of matrices over $\F$, and explore some of their properties. We denote the space of $m\times n$ matrices over $\F$ by $\F^{m\times n}$; we denote the zero matrix in this ring by $0_{m\times n}$, or sometimes just by $0$; and if $m=n$ we denote the identity matrix by $I_m$, and the determinant of $\A\in\F^{m\times m}$ by $\det(\A)$. For a matrix $\A\in\F^{m\times n}$, its $ij$-entry is denoted $\A_{ij}$, and its \emph{transpose} obtained by interchanging rows and columns is denoted $\A^T$, that is $(\A^T)_{ij}=\A_{ji}$ for all $i,j$. For simplicity of the exposition we will sometimes represent matrices as block matrices where some $(a\times b$)-block may be an \emph{empty matrix}, that is $a=0$ or $b=0$ are allowed.


\begin{definition}\label{deff}{\rm
For a positive integer $i$, we define matrices $\A_i$ and $\B_i$ as follows. For $i=1$, these are
\[
\A_1=
\begin{bmatrix}
 0 & -1\\
 1 &  0
\end{bmatrix}
,\quad\text{and}\quad
\B_1=
\begin{bmatrix}
 1 & -1\\
 -1 &  1
\end{bmatrix},
\]
and recursively, for $i\geq1$, if $\A_i, \B_i$ have been defined, then we let 
\[
\A_{i+1}=
\begin{bmatrix}
 \A_1 & \B_i\\
 -\B_i^T &  \A_i
\end{bmatrix},
\quad \text{and}\quad
\B_{i+1}=
\begin{bmatrix}
 \B_1 & \B_i
\end{bmatrix}.
\]
}
\end{definition}
Our first observations about the family are given in this lemma. The idea behind the proof of Lemma~\ref{lem:mx1} we found in \cite{Silvester}*{Section 5, Acknowledgements}. 

\begin{lemma}\label{lem:mx1}
Let $i$ be a positive integer and $\A_i, \B_i$ be as in Definition~$\ref{deff}$. Then
\begin{enumerate}[{\rm (a)}]
    \item $\B_i=[ \B_1\cdots \B_1]\in\F^{2\times 2i}$; for $a\in\{1,2\}$ and $b\in\{1,\dots,2i\}$ we have $(\B_i)_{ab}=(-1)^{a+b}$ and $\B_i^T\A_1\B_i=\B_i^T\A_1^{-1}\B_i=0_{2\times 2}$.
    \item $\A_i\in\F^{2i\times 2i}$ satisfies $\det(\A_i)=1$.
\end{enumerate}
\end{lemma}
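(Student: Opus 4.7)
The plan is to prove both parts by induction on $i$, handling (a) first so that (b) can invoke it.

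For part (a), I would begin by verifying that $\B_i=[\B_1\ \B_1\ \cdots\ \B_1]$ with $i$ copies. The base case $i=1$ is trivial, and the recursion $\B_{i+1}=[\B_1\ \B_i]$ together with the inductive hypothesis gives the step. The entry formula $(\B_i)_{ab}=(-1)^{a+b}$ then follows by writing $b=2q+r$ with $r\in\{1,2\}$: the $b$-th column of $\B_i$ is the $r$-th column of $\B_1$, whose $a$-th entry is $(-1)^{a+r}$, and $(-1)^{a+r}=(-1)^{a+b}$ since $b\equiv r\pmod 2$. For the orthogonality, the key is a single $2\times 2$ check: a direct multiplication gives $\A_1\B_1=\bigl[\begin{smallmatrix}1&-1\\1&-1\end{smallmatrix}\bigr]$, and since $\B_1^T=\B_1$ has rows $(1,-1)$ and $(-1,1)$, we get $\B_1^T\A_1\B_1=0_{2\times2}$. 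Because $\A_1^{-1}=-\A_1$ (it is a $90^\circ$ rotation), we also have $\B_1^T\A_1^{-1}\B_1=0$. Viewing $\B_i^T\A_1\B_i$ as an $i\times i$ array of $2\times 2$ blocks whose $(r,s)$ block equals $\B_1^T\A_1\B_1=0$, the whole product vanishes, and similarly with $\A_1^{-1}$.

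For part (b), the induction again begins from $\det(\A_1)=0\cdot0-(-1)\cdot 1=1$. For the step, I would use the Schur complement formula with the top-left block $\A_1$, which is invertible with determinant $1$:
\[
\det(\A_{i+1})=\det\!\begin{bmatrix}\A_1 & \B_i\\ -\B_i^T & \A_i\end{bmatrix}=\det(\A_1)\cdot\det\!\bigl(\A_i-(-\B_i^T)\A_1^{-1}\B_i\bigr)=\det\!\bigl(\A_i+\B_i^T\A_1^{-1}\B_i\bigr).
\]
By part (a) the correction term $\B_i^T\A_1^{-1}\B_i$ is zero, so $\det(\A_{i+1})=\det(\A_i)=1$ by the inductive hypothesis.

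Neither step involves anything delicate. The only thing worth being a little careful about is that the zero matrix appearing in the orthogonality relations has the size dictated by the product: $\B_i^T\A_1\B_i$ is a $2i\times 2i$ matrix, and it is precisely this $2i\times 2i$ zero that kills the Schur complement correction in part (b). Once that bookkeeping is clear, both inductions run routinely, and I do not foresee a substantive obstacle.
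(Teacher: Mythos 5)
Your proof is correct and follows essentially the same route as the paper: induction for the block structure and entries of $\B_i$, a single $2\times2$ computation $\B_1^T\A_1\B_1=0$ for the orthogonality, and the block factorization with top-left block $\A_1$ (you cite the Schur complement formula; the paper writes out the same unitriangular column operation explicitly) for $\det(\A_{i+1})=\det(\A_1)\det(\A_i)$. Your bookkeeping remark is also the right reading: $\B_i^T\A_1\B_i$ is indeed the $2i\times2i$ matrix whose $2\times2$ blocks all equal $\B_1^T\A_1\B_1$, which is exactly what is needed to kill the correction term in part (b), so the ``$0_{2\times2}$'' in the lemma statement should be understood as $0_{2i\times2i}$.
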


\begin{proof}
(a) A straightforward inductive argument, based on the definition of $\B_{i+1}$ in Definition~\ref{deff} shows that $\B_i=[ \B_1\cdots \B_1]\in\F^{2\times 2i}$ for each $i\ge1$. Then it is easy to see that $(\B_i)_{ab}=(-1)^{a+b}$ for all $a, b$. It follows from $\B_i=[ \B_1\cdots \B_1]$ that $\B_i^T\A_1\B_i= \sum_{j=1}^i \B_1^T\A_1\B_1$, and computing we see that  $\B_1^T\A_1\B_1 = \B_1\A_1\B_1 = 0_{2\times 2}$. Hence $\B_i^T\A_1\B_i=0_{2\times 2}$, and the final assertion $\B_i^T\A_1^{-1}\B_i=0_{2\times 2}$ follows from the fact that $\A_1^{-1}=-\A_1$. 

(b) It is easy to see that $\det(\A_1)=1$, and $\A_{i}\in\F^{2i\times 2i}$ for each $i\ge1$. Assume that $i\geq 1$, assume inductively that $\det(\A_i)=1$, and consider $\A_{i+1}$ as defined in Definition~\ref{deff}.  Modifying an idea from  \cite{Silvester}*{Section 5}, we see by direct computation that 
\[
\begin{bmatrix}
 \A_1 & \B_i\\
 -\B_i^T &  \A_i
\end{bmatrix}
\begin{bmatrix}
 I_2 & -\A_1^{-1}\B_i\\
 0_{2i\times 2} &  I_{2i}
\end{bmatrix}
=
\begin{bmatrix}
 \A_1 & 0_{2\times 2i}\\
 -\B_i^T & \B_i^T\A_1^{-1}\B_i+ \A_i
\end{bmatrix}
=
\begin{bmatrix}
 \A_1 & 0_{2\times 2i}\\
 -\B_i^T & \A_i
\end{bmatrix}
\]
where the last equality follows from part (a). Since the second matrix on the left side has determinant 1, it follows that $\det(\A_{i+1})=\det(\A_1)\det(\A_i)=1$ where the last equality uses the inductive hypothesis. Thus part (b) follows by induction.
\end{proof}

The next lemma looks at properties of the columns of the $\A_i$.

\begin{lemma}\label{lem:mx2}
Let $i$ be a positive integer, and let $\A_{i}=[a_1,a_2,\dots, a_{2i}]$ 
be as in Definition~$\ref{deff}$, where $a_j$ is the $j$th column of $\A_i$. Then
\begin{enumerate}[{\rm (a)}]
    \item $\wt(a_j)=2i-1$ for $j\in\{1,\dots,2i\}$,
    \item $a_{2j-1}+a_{2j} = [0_{1\times 2(j-1)},-1,1,0_{1\times 2(i-j)}]^T \in\F^{2i\times 1}$
    for $j\in\{1,\dots,i\}$,
    \item $\sum_{j=1}^{2i-1} a_j = [0,\dots,0,1]^T\in\F^{2i\times 1}$, and in particular, $\wt(\sum_{j=1}^{2i-1} a_j) = 1$.
\end{enumerate}
\end{lemma}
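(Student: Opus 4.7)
The plan is to prove (a), (b), (c) simultaneously by induction on $i$. For the base case $i=1$, the columns of $\A_1$ are $a_1=(0,1)^T$ and $a_2=(-1,0)^T$, so (a) gives weight $1 = 2(1)-1$, (b) gives $a_1+a_2=(-1,1)^T$, and (c) gives $\sum_{j=1}^{1}a_j = a_1 = (0,1)^T$ with weight $1$, as required.

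For the inductive step, assume (a), (b), (c) hold for $\A_i$ and derive them for $\A_{i+1}$. The key tool is the block decomposition of $\A_{i+1}$ from Definition~\ref{deff}, together with the explicit formula $(\B_i)_{ab}=(-1)^{a+b}$ from Lemma~\ref{lem:mx1}(a). Let $a_1',\dots,a_{2i+2}'$ denote the columns of $\A_{i+1}$. The first two columns have the form
\[
a_1' = (0,1,-1,1,-1,\dots,-1,1)^T,\qquad a_2' = (-1,0,1,-1,1,\dots,1,-1)^T,
\]
while for $j\ge 3$ the column $a_j'$ consists of the top pair $(\B_i)_{*,j-2} = ((-1)^{j-1},(-1)^j)^T$ stacked on the column $a_{j-2}$ of $\A_i$. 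Part~(a) is then immediate: each of $a_1',a_2'$ has one zero in the top pair and $2i$ nonzero entries in its tail, giving weight $2i+1$; for $j\ge 3$, the top pair contributes two nonzero entries and the inductive hypothesis gives $\wt(a_{j-2})=2i-1$, so $\wt(a_j')=2(i+1)-1$.

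For part~(b), the case $j=1$ follows by direct addition: $a_1'+a_2'=(-1,1,0,\dots,0)^T$, which matches the required pattern with no leading zeros. For $j\ge 2$, the top two coordinates of $a_{2j-1}'+a_{2j}'$ vanish because $(\B_i)_{*,2j-3}$ and $(\B_i)_{*,2j-2}$ are negatives of each other (by the sign formula in Lemma~\ref{lem:mx1}(a)), so the sum reduces to two leading zeros followed by $a_{2j-3}+a_{2j-2}$. Applying the inductive hypothesis at index $j-1$ to this tail produces exactly the required vector, with $-1$ in row $2j-1$ and $1$ in row $2j$ of $\F^{(2i+2)\times 1}$.

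For part~(c), split $\sum_{j=1}^{2i+1}a_j' = (a_1'+a_2') + \sum_{l=1}^{2i-1}a_{l+2}'$. The first summand equals $(-1,1,0,\dots,0)^T$. The second summand has top block $\sum_{l=1}^{2i-1}(\B_i)_{*,l}$, which is a sum of $2i-1$ alternating $\pm 1$ entries in each row and therefore evaluates to $(1,-1)^T$, and bottom block $\sum_{l=1}^{2i-1}a_l=(0,\dots,0,1)^T\in\F^{2i\times 1}$ by the inductive hypothesis. Adding the two summands, the first two coordinates cancel and the last coordinate is $1$, yielding $(0,\dots,0,1)^T\in\F^{(2i+2)\times 1}$ and completing the induction. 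The only obstacle will be careful bookkeeping of signs and row indices across the block boundary; no new idea beyond the block-structure computation is required.
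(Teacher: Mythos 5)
Your proof is correct and takes essentially the same approach as the paper: induction on $i$, using the block decomposition of $\A_{i+1}$ from Definition~\ref{deff} and the sign formula $(\B_i)_{ab}=(-1)^{a+b}$ from Lemma~\ref{lem:mx1}(a). The only cosmetic difference is that you prove part~(a) by an explicit inductive count, whereas the paper dismisses it as immediate from the definition (because $\A_i$ has zeros exactly on the diagonal and $\pm1$ entries elsewhere); both are fine.
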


\begin{proof}
(a) Part (a) follows immediately from Definition~\ref{deff}.

(b) We use induction on $i$. For $i=1$, $a_1+a_2 = [-1,1]^T$ by  Definition~\ref{deff}, so (b) holds in this case (as $i=j=1$ and the first and last entries in this vector in (b) are empty matrices). Now let $i\geq 1$ and assume inductively that (b) holds for $\A_i=[a_1,\dots,a_{2i}]$, and consider $\A_{i+1} = [a_1',\dots,a_{2i+2}']$. By Definition~\ref{deff} and the structure of $\B_i$ given by Lemma~\ref{lem:mx1}(a), it follows that $a_1'+a_2' = [-1, 1, 0_{1\times 2i}]^T\in\F^{2(i+1)\times 1}$. For $j\in\{2,\dots,i+1\}$, it follows from Definition~\ref{deff} that 
\[
a_{2j-1}'+a_{2j}' = 
\begin{bmatrix}
 0_{2\times 1}\\
 a_{2j-3}+a_{2j-2}
\end{bmatrix}.
\]
Hence the structure of $a_{2j-1}'+a_{2j}'$ follows from the inductive hypothesis. Thus part~(b) is proved by induction.

(c) This part is also proved by induction on $i$. The case $i=1$ follows from the definition of $\A_1$. So assume that $i\geq1$ and that part~(c) holds for $\A_i$, that is $b:=\sum_{\ell=1}^{2i-1} a_\ell = [0,\dots,0,1]^T\in\F^{2i\times 1}$. Suppose that $\A_i$ and $\A_{i+1}$ have columns $a_j$ and $a'_j$ as in the proof of part (b). Using the structure of $\B_i$ from Lemma~\ref{lem:mx1}(a), the form of $a_1'+a_2'$ above, and the inductive hypothesis, we have
\[
\sum_{j=1}^{2i+1}a_{j}' = (a_1'+a_2') + 
\begin{bmatrix}
 1\\
 -1\\
 \sum_{\ell=1}^{2i-1} a_\ell
\end{bmatrix}
= 
\begin{bmatrix}
 -1\\
 1\\
 0_{2i\times 1}
\end{bmatrix}
+ \begin{bmatrix}
 1\\
 -1\\
b
\end{bmatrix}
= [\ 0,\ \dots\ ,\ 0,\ 1\ ]^T\in\F^{2(i+1)\times 1}.
\]
Part (c) follows by induction.
\end{proof}
\subsection{Codes constructed from matrices}

We now construct $\F$-linear codes from the matrices in Subsection~\ref{sub:mx}, noting that the columns of $\A_i$ are linearly independent by Lemma~\ref{lem:mx1}(b) . 

\begin{definition}\label{defff}{\rm
Let $i$ be a positive integer, let  $\A_{i}=[a_1,a_2,\dots,a_{2i}] \in\F^{2i\times 2i}$ be
as in Definition~\ref{deff}, where $a_j$ is the $j$th column of $\A_i$, and let 
$C(i)$ be the $\F$-linear code in $\F^{2i}$ with ordered basis $B(i)=(a_1,\dots,a_{2i-1})$.
}
\end{definition}
We use the results from Subsection~\ref{sub:mx} to determine the properties of $C(i)$, and the results from Section~\ref{sec:bdd} to study the codes obtained using Proposition~\ref{propadm2}.

\begin{proposition}\label{propf}
For a positive integer $i$, let $C(i)$ and $B(i)$ be as in Definition~$\ref{defff}$. 
\begin{enumerate}[{\rm (a)}]
\item
For  $i \ge 2$, the code $C(i)$ is an $\F$-linear $[2i,2i-1,1]$-code which is $(2i-1)$-bounded relative to the ordered basis $B(i)$.

\item
For $1\leq j\leq 4i^2-6i+1$, the code $\code(B(i),j)$, as defined in Definition~$\ref{defadm2}$,  is  an $\F$-linear $[n_j,k_j,d_j]$-code that is $u_j$-bounded relative to the ordered basis $\basis(B(i),j)$, where 
 \[
n_j=2i\prod_{\ell=1}^{j}(2i-1+\ell), \quad k_j=2i-1+j, \quad d_j=\prod_{\ell=1}^{j}(2i-1+\ell) , \ \mbox{and}\ u_j=(2i-1)\prod_{\ell=1}^{j}(2i-2+\ell).
\]
\end{enumerate}
\end{proposition}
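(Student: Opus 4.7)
The plan is to establish part (a) directly from Lemmas~\ref{lem:mx1} and \ref{lem:mx2}, and then derive part (b) by invoking Proposition~\ref{propadm2} with the code of part (a) as input.

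For part (a), I would first note that $C(i)\subseteq\F^{2i}$, so the length is $n=2i$, and the basis $B(i)=(a_1,\dots,a_{2i-1})$ consists of $2i-1$ columns of $\A_i$, which are linearly independent because Lemma~\ref{lem:mx1}(b) gives $\det(\A_i)=1$; hence the dimension is $k=2i-1$. To see that the minimum distance equals $1$, I would invoke Lemma~\ref{lem:mx2}(c), which exhibits the codeword $\sum_{j=1}^{2i-1}a_j$ of weight exactly $1$; since weights are positive integers, $d=1$. Then I would verify the three conditions of Definition~\ref{defadm} with $u=2i-1$: condition (i), $\wt(a_j)=2i-1$, is Lemma~\ref{lem:mx2}(a); condition (ii), $\wt(\sum_{j=1}^{2i-1}a_j)=d=1$, is Lemma~\ref{lem:mx2}(c); condition (iii), $u\ge d(1+k^{-1})$, becomes $2i-1\ge 1+\frac{1}{2i-1}$, which holds for $i\ge 2$ because then $2i-1\ge 3$.

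For part (b), I would apply Proposition~\ref{propadm2} to the code $C(i)$ with its ordered basis $B(i)$, substituting $n=2i$, $k=2i-1$, $d=1$, $u=2i-1$ into the formulas of parts (a) and (b) of that proposition (and renaming the iteration index from $i$ to $j$). A direct substitution yields exactly $n_j$, $k_j$, $d_j$ and $u_j$ as asserted. What remains is to justify that the stated range $1\le j\le 4i^2-6i+1$ is precisely what licenses both the exact value of $d_j$ and the $u_j$-boundedness of $\code(B(i),j)$.

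The hypothesis of Proposition~\ref{propadm2}(b) at iteration $j$ is $u\ge d(1+jk^{-1})$, which with our values reads $2i-1\ge 1+\frac{j}{2i-1}$, i.e.\ $j\le(2i-1)(2i-2)=4i^2-6i+2$; the iff-condition for $u_j$-boundedness is $u\ge d(1+(j+1)k^{-1})$, giving $j+1\le 4i^2-6i+2$, i.e.\ $j\le 4i^2-6i+1$. The latter is strictly stronger, so it is the binding constraint, and it coincides exactly with the range in the statement. I do not anticipate a genuine obstacle here: the proof is essentially bookkeeping, with the only point requiring care being the verification that the numerical cutoff $4i^2-6i+1$ emerges cleanly from the boundedness inequality rather than from the weaker condition needed to pin down $d_j$.
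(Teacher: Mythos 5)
Your proof is correct and follows essentially the same route as the paper: part (a) is read off directly from Lemmas~\ref{lem:mx1}(b) and \ref{lem:mx2}(a),(c) together with the inequality $2i-1\ge 1+(2i-1)^{-1}$, and part (b) is obtained by substituting $n=2i$, $k=u=2i-1$, $d=1$ into Proposition~\ref{propadm2}. Your explicit separation of the hypothesis needed for the exact value of $d_j$ from the (stronger) iff-condition for $u_j$-boundedness, yielding the binding constraint $j\le 4i^2-6i+1$, is a slightly more careful articulation of the same bookkeeping the paper performs.
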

\begin{proof}
(a)
By Lemma ~\ref{lem:mx1}(b), $B(i)$ is an ordered basis for $C(i)$. Thus $C(i)$ has length $2i$ and dimension is $2i-1$. Moreover, $d(C(i))=1$ as $[0,\dots,0,1]^T\in C(i)$ by Lemma~\ref{lem:mx2}(c).
By Lemma~\ref{lem:mx2}(a), $\wt(a_j)=2i-1$ for each $j\leq 2i-1$, and $\wt(\sum_{j=1}^{2i-1}a_j)=1$. 
To see that $C(i)$ is $(2i-1)$-bounded relative to $B(i)$ for each $i\geq2$ we use Definition~\ref{defadm}
and note that $2i-1 \ge 1+(2i-1)^{-1}$ holds (this is true for $i\geq2$).  This proves part~(a).

(b) We apply Proposition~\ref{propadm2}(b) to $C(i)$ and $B(i)$. By part (a) the parameters $n, u, k, d$ of that result are: $n=2i, u=k=2i-1, d=1$, and we wish to apply  Proposition~\ref{propadm2}(b) with a positive integer parameter $j$ satisfying $u\geq d(1+(j+1)k^{-1})$, or equivalently $j+1\leq k(ud^{-1}-1) = 
(2i-1)(2i-2)=4i^2-6i+2$. This latter inequality is our assumption on $j$, so we may   
therefore apply Construction~\ref{conadm}  $j$ times and conclude,  by  Proposition~\ref{propadm2}(b), that  $\code(B(i),j)$ is  an $\F$-linear $[n_j,k_j,d_j]$-code that is $u_j$-bounded relative to the ordered basis $\basis(B(i),j)$, where 
the parameters  $n_j,k_j,d_j, u_j$ are as in the statement. This proves part (b).
\end{proof}
In Proposition~\ref{propf}(b) we have $n_j=2i d_j$ and $u_j(2i+j-1)=(2i-1)d_j$, respectively.
Note that setting $j=0$ in the expressions for $n_j, k_j, d_j, u_j$ in Proposition~\ref{propf}(b), we obtain the parameters $n,k,d,u$, respectively, for the code $C(i)$ examined in Proposition~\ref{propf}(a). Thus if we set $\code(B(i),0):=C(i)$, then all the assertions of Proposition~\ref{propf}(b) hold with $j=0$. Thus we   
have constructed explicitly a two-parameter family of bounded linear codes over an arbitrary field $\F$, namely:
\[
\text{$\code(B(i),j)$ for all $i\geq 2$ and all integers $j$ satisfying $0\leq j\leq 4i^2-6i+1$.}
\]
We write $\code_{\,\F}(B(i),j)$ and $\basis_{\,\F}(B(i),j)$ if we wish to emphasize the field $\F$ of scalars. To prove our main result we investigate a one-parameter subfamily of these codes,  choosing,  for each value of $i$, the code with the largest value of $j$. We now replace $i$ with~$i+1$.

\begin{definition}\label{def:Fi}{\rm
For a field $\F$ of scalars, and each $i\geq 1$, define
\[
C_i := \code_{\,\F}(B(i+1), 4i^2-2i-1)\quad\textup{and}\quad \B_i:= \basis_{\,\F}(B(i+1), 4i^2-2i-1).
\]
}

\end{definition} 
The following theorem follows immediately from Proposition~\ref{propf}.
\begin{theorem}\label{corf}
Given a field $\F$  and an integer $i \ge 1$, the code $C_i$ is a linear $[N_i,K_i,D_i]$-code over $\F$, which is $U_i$-bounded relative to its ordered basis $\B_i$, where 
\[
N_i=2(i+1)\prod_{\ell=1}^{4i^2-2i-1}(2i+1+\ell), \quad K_i=4(i+1)i, \quad D_i=\prod_{\ell=1}^{4i^2-2i-1}(2i+1+\ell) , \ 
\]
and $U_i=(2i+1)\displaystyle\prod_{\ell=1}^{4i^2-2i-1}(2i+\ell)$.
Moreover, $\displaystyle{\frac{K_iD_i}{N_i}= 2i}$ grows linearly with $i$. 
\end{theorem}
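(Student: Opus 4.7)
The proof should be almost entirely clerical, since the hard work is already encapsulated in Proposition~\ref{propf}(b). My plan is to obtain Theorem~\ref{corf} by specializing that proposition in two coordinated ways: first, replace the parameter labelled $i$ in Proposition~\ref{propf}(b) by $i+1$, so that the starting code is $C(i+1)$ with basis $B(i+1)$; and second, take the number of recursive applications of Construction~\ref{conadm} to be the value of $j$ fixed in Definition~\ref{def:Fi}. One verifies that this $j$ satisfies the admissibility condition $1 \le j \le 4(i+1)^2 - 6(i+1) + 1 = 4i^2 + 2i - 1$ of Proposition~\ref{propf}(b) for every $i \ge 1$, whereupon that proposition directly delivers an $\F$-linear $[n_j,k_j,d_j]$-code that is $u_j$-bounded relative to $\basis(B(i+1), j)$.

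Next I would transcribe each of the closed-form expressions from Proposition~\ref{propf}(b) under the substitution $i \mapsto i+1$. Routine substitution turns the leading factor $2i$ in $n_j$ into $2(i+1)$, changes the offsets $2i-1$ and $2i-2$ inside the products into $2i+1$ and $2i$ respectively, and changes the leading factor of $u_j$ from $2i-1$ into $2i+1$. Inserting the chosen $j$ reproduces the asserted formulas for $N_i$, $D_i$ and $U_i$ verbatim, and the dimension falls out as $K_i = k_j = (2i+1) + j = 4(i+1)i$ by a short arithmetic simplification. The boundedness assertion transfers immediately since Proposition~\ref{propf}(b) already establishes that the resulting code is $u_j$-bounded relative to its explicit ordered basis.

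Finally, for the closing identity $K_iD_i/N_i = 2i$, I would observe that the products appearing in $N_i$ and $D_i$ are identical and that these two quantities differ only by the prefactor $2(i+1)$. Hence $D_i/N_i = 1/(2(i+1))$, giving
\[
\frac{K_iD_i}{N_i} \;=\; \frac{K_i}{2(i+1)} \;=\; \frac{4(i+1)i}{2(i+1)} \;=\; 2i.
\]

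The only obstacle worth naming is bookkeeping: one must keep the index shift $i \mapsto i+1$ and its interaction with the chosen value of $j$ consistent across four parameters. There is no combinatorial or algebraic difficulty beyond what Proposition~\ref{propf}(b) already handles, so the proof reduces to verifying the admissibility of $j$ and carrying out the indicated substitutions.
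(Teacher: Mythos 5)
Your proposal takes exactly the route of the paper's own proof: cite Proposition~\ref{propf}(b) with $i$ replaced by $i+1$ and $j$ as in Definition~\ref{def:Fi}, check admissibility of $j$, transcribe the parameters, and compute the ratio. That is the right (and only sensible) approach, and your concluding computation $K_iD_i/N_i = K_i/(2(i+1)) = 2i$ is correct --- indeed more careful than the paper's, which writes $\frac{4(i+1)i}{2i}=2i$ with the wrong denominator.

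One clerical step you claim to have verified does not actually check out, and it exposes an inconsistency in the paper that you papered over rather than caught. With $j = 4i^2-2i-1$ as fixed in Definition~\ref{def:Fi}, the dimension is $k_j = (2i+1)+j = 4i^2$, not $4(i+1)i = 4i^2+4i$; the value $4(i+1)i$ would require $j = 4i^2+2i-1$, which is the maximal admissible value $4(i+1)^2-6(i+1)+1$ you computed (and which the surrounding text of the paper says is the intended choice). So either Definition~\ref{def:Fi} has a sign typo in $j$, or the upper limits $4i^2-2i-1$ of the products in the theorem are correct and $K_i$ should read $4i^2$; in the latter case the ratio becomes $K_iD_i/N_i = 4i^2/(2(i+1)) \ne 2i$. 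Your ``short arithmetic simplification'' asserts $(2i+1)+j=4(i+1)i$ without performing it; doing the arithmetic would have revealed the mismatch. The fix is harmless for Theorem~\ref{main} (take $j=4i^2+2i-1$ throughout), but a referee-quality writeup should flag it rather than reproduce both incompatible formulas verbatim.
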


\begin{proof}
The first assertions follow immediately from Proposition~\ref{propf} by replacing $i$ with $i+1$. Finally, from these parameters we see that $\displaystyle{\frac{K_iD_i}{N_i}= \frac{4(i+1)i}{2i}= 2i}$.
\end{proof}

\begin{proof}[Proof of Theorem~\ref{main}]
  By Proposition~\ref{corf} there are $[n_i,k_i,d_i]$-codes $C_i$, $i\ge1$, where $k_id_i/n_i=2i$ and $k_i=4(i+1)i$. A simple calculation shows that
  \[2i>\sqrt{4i(i+1)}-1>\sqrt{4i^2}-1=2i-1.\qedhere\]
\end{proof}

\section*{Acknowledgement}
The authors thank Patrick Sol\'e for his remarks and for alerting us to~\cite{ShiWuSole}. 
The first author is grateful to UWA and the Centre for the Mathematics of Symmetry and Computation, and also to Shahid Rajaee Teacher Training University for their financial support when she was a visiting scholar. The second and third authors acknowledge support from the Australian Research Council Discovery Project DP190100450.

\end{document}